\newtheorem{theorem}{Theorem}
\newtheorem{lemma}{Lemma}
\newcommand{\bigop}[1]{\mathcal{O}_{\Bbb{P}}\left(#1\right)}
\newcommand{\bigo}[1]{\mathcal{O}\left(#1\right)}
\newcommand{\prob}[1]{\mathbb{P}\left(#1\right)}
\newcommand{\vertiii}[1]{{\left\vert\kern-0.25ex\left\vert\kern-0.25ex\left\vert #1 
    \right\vert\kern-0.25ex\right\vert\kern-0.25ex\right\vert}}
\algnewcommand\algorithmicinput{\textbf{Input:}}
\algnewcommand\Input{\item[\algorithmicinput]}
\algnewcommand\algorithmicmachine{\textbf{Machine:}}
\algnewcommand\Machine{\item[\algorithmicmachine]}
\algnewcommand\algorithmiccentralhub{\textbf{Central Hub:}}
\algnewcommand\CentralHub{\item[\algorithmiccentralhub]}
\title{Efficient Distributed Estimation of Inverse Covariance Matrices}
\name{Jes\'us Arroyo*, Elizabeth Hou* \thanks{*Both authors contributed equally}}
\address{Department of Statistics, University of Michigan} 
\begin{document}
\ninept
\maketitle
\begin{abstract}
In distributed systems, communication is a major concern due to issues such as its vulnerability or efficiency. In this paper, we are interested in estimating sparse inverse covariance matrices when samples are distributed into different machines. We address communication efficiency by proposing a method where, in a single round of communication, each machine transfers a small subset of the entries of the inverse covariance matrix. We show that, with this efficient distributed method, the error rates can be comparable with estimation in a non-distributed setting, and correct model selection is still possible. Practical performance is shown through simulations.
\end{abstract}
\begin{keywords}
Distributed Estimation, Debiased Estimators, Efficient Communication, Gaussian Graphical Models, Inverse Covariance Estimation
\end{keywords}
\section{Introduction}
\label{sec:intro}

\noindent The collection of copious and meticulous amounts of information has led to the modern phenomena of datasets being both high-dimensional and very large in sample size. These massive datasets are distributed over multiple machines due to size limitations or because data is collected and stored independently. Even in the case when a single machine is large enough, there are efficiency, security, and privacy concerns in aggregating all the data onto one machine. Bandwidth restrictions can make impossible or inefficient to send large amounts of data and the more communication in the system, the more vulnerable it is to attacks. In addition, the raw dataset may contain sensitive information in the individual samples such as in medical or financial records. Thus, it is advantageous, if not necessary, for each machine to be able to calculate compact estimates that can be efficiently communicated and preserve the confidentiality of the samples. 

Estimation of the inverse covariance matrix is used in the analysis of different types of data, such as gene expression or brain imaging. In particular, when samples are Gaussian, the non-zero entries of the inverse covariance correspond to the edges of a Gaussian Markov random field. Thus, accurately estimating the non-zero pattern of the inverse covariance matrix is crucial.

In this paper, we develop a method for estimating a sparse inverse covariance matrix when the samples of the data are distributed among machines. Our method is efficient in communication, in the sense that only a single round of communication between each machine and a central hub is sufficient, and the bandwidth required is small compared to the size of the matrix estimated. 

\subsection{Related Work}

\noindent High-dimensional estimation of inverse covariance matrices is usually addressed by $\ell_1$ penalized convex optimization \cite{banerjee2008model,friedman2008sparse}. In a distributed setting, general frameworks for convex optimization involve multiple rounds of communication between all machines \cite{Boyd:2011:DOS:2185815.2185816,duchi2012dual}, which can be expensive. For some $\ell_1$ penalization problems, the communication cost can be reduced to a single round of communication \cite{Zhang:2013:CAS:2567709.2567769,lee2015} by reducing the bias with bootstrapped or debiased estimators \cite{2013Javanmard}. Our work follows a similar approach as \cite{lee2015} for penalized linear regression; however, we introduce efficiency not only in the amount of communication, but also in the size of the communication channel. Related work on inverse covariance estimation has focused on the case where variables are distributed and the structure of the associated graphical model is known \cite{meng2014marginal}. Here, our samples are distributed, and we must estimate the structure of the matrix and the value of its entries.

\subsection{Outline}

\noindent The rest of this paper is organized as follows: Section 2 presents a communication efficient method for inverse covariance estimation in a distributed setting. Section 3 studies the error rates of the estimator and shows that model selection consistency is possible. Section 4 studies the practical performance and compares our method with other distributed and non-distributed approaches. Proofs of the theoretical results are included in the appendix. 

\section{Distributed Inverse covariance estimation by debiasing and thresholding}

\noindent We work in a setting where samples are distributed among $M$ different machines. Each observation is a vector of size $p$ coming from a distribution with covariance $\Sigma$ and inverse covariance $\Theta$. Let $ s $ be the number of non-zero entries in $\Theta$ and $ d $ the maximum number of non-zeros per row.  Let $\mathcal{S}$ denote the set of non-zero entries of $\Theta$ and $\mathcal{S}^c$ the set of zeros. 
We assume that the data is split equally over all machines, where each machine has $n$ observations, so we denote $X_m \in \mathbb{R}^{n \times p }$ as the data matrix on machine $m$. We are interested in estimating $ \Theta$. 

In a distributed setting, it is desirable to have a single round of communication between each machine and the central hub. Moreover, bandwidth or storage capacity often limits the amount of data that can be shared with the central hub, forcing the data to be distributed. Thus, our approach is based on constructing sparse estimators on each machine that when aggregated in the central hub, provide a good estimation for $\Theta$.

In the high dimensional setting, where $n \ll p$, a common approach to obtain a sparse estimator of $\Theta$ is by minimizing the $\ell_1$-penalized log-determinant Bregman divergence. Thus on each machine, this estimator, known as \emph{graphical lasso}, is defined as
\begin{equation} \label{l1-opt}
\hat{\Theta}_m := \underset{\theta \in \mathbb{S}^p_{++}}{\arg\min} \left\{ \,\text{tr}(\theta \hat{\Sigma}_m) - \log\det\theta + \lambda ||\theta||_{1, \text{off}}\right\},
\end{equation}
where $\|\cdot\|_{1,\text{off}}$ denotes the $\ell_1$ norm of the off-diagonal entries of a matrix, and $\mathbb{S}^p_{++}$ is the cone of positive definite matrices of size $p$.
For a single machine, this estimator has been studied and shown to be asymptotically consistent in mean squared error with rate $\bigop{s\log p /n}$ \cite{rothman2008}. Moreover, the set of non-zero entries of $\hat{\Theta}$ coincides with $\mathcal{S}$ when $\lambda\asymp\sqrt{\log p/n}$ and under certain conditions \cite{ravikumar2011}.

A naive approach for distributed estimation would be to average the estimated $\hat{\Theta}_m$ from each machine. However, this estimator is biased due to the $\ell_1$ penalty, and averaging only improves the variance, not the bias. We adopt a similar approach as \cite{lee2015} did for lasso regression by trading-off the bias for variance. 

The \emph{debiased graphical lasso} estimator was proposed in order to construct confidence intervals for the entries of $\Theta$ \cite{jankova2015}. The idea of this estimator is to invert the Karush-Kuhn-Tucker (KKT) conditions of the optimization problem in equation \eqref{l1-opt} in order to get a debiased estimator defined as
\begin{equation} \label{debiased est} \hat{\Theta}_m^d := \hat{\Theta}_m + \hat{\Theta}_m((\hat{\Theta}_m)^{-1}-\hat{\Sigma}_m)\hat{\Theta}_m.
\end{equation}
The \emph{debiased graphical lasso} has the appealing property that each entry of the matrix is asymptotically normal distributed. It is shown that $\hat{\Theta}_m^d$ can also be written as
\begin{equation} \label{loss}
\hat{\Theta}_m^d = \Theta-\Theta(\hat{\Sigma} - \Sigma)\Theta + \Delta_m,
\end{equation}
where $\|\Delta_m\|_\infty=\bigop{d\log p/n}$ accounts for the bias, and the second term in the equation is asymptotically normal \cite{jankova2015}. The bias of these estimators is of smaller order than the bias of the graphical lasso, and the variance is reduced when we average these estimators in the central hub to get an overall estimator $\frac{1}{M}\sum_{m=1}^M \hat{\Theta}^{d}_{m } $.
When the data is not distributed into too many machines, the averaged estimator can get similar error rates in $\ell_\infty$ norm as the graphical lasso performed on all the data (see Lemma \ref{lemma:avgglasso}).

The debiased graphical lasso estimator is not sparse. This fact is problematic in a distributed setting, since it will require the transfer of $p^2$ entries, which might be larger than the data on each machine. Under the sparsity assumption on $\Theta$, we are actually only interested in the value of $s+p$ entries. Hence, on each machine we select the most significant coefficients of $\hat{\Theta}_m$ and send them to the central hub. The sparse estimator is defined as
\begin{equation*}
\hat{\Theta}^{d,\rho}_{ij} := \hat{\Theta}^{d}_{ij} \, \mathbb{I}\left(|\hat{\Theta}^d_{ij}| > \rho \, \hat{\sigma}_{ij}\right),
\end{equation*}
where $\hat{\sigma}_{ij}$ is an estimator for $\sigma^2_{ij} = \text{Var}(\Theta_{i\cdot} X_{1\cdot} X_{1\cdot}^T \Theta_{\cdot j})$ and $\mathbb{I}(E)$ is the indicator of the event $E$. For gaussian random vectors, a good estimator for this quantity is $\hat{\sigma}^2_{ij}=\hat{\Theta}_{ii}\hat{\Theta}_{jj}+\hat{\Theta}_{ij}^2$ (Lemma 2 of \cite{jankova2015}). In order to achieve correct estimation (in the central hub) of the support of $\Theta$, it is optimal to send as many entries as possible. So we let the threshold parameter $\rho$ be a function of $B$, the bandwidth of the communication channel from each machine to the central hub, and we set $\rho$ as the smallest threshold that still fits in the channel. In Algorithm \ref{alg machine}, we summarize the estimation procedure on each machine.

\begin{algorithm} \label{alg machine}
\caption{Thresholded Debiased Estimator}
\begin{algorithmic} 
\Machine Do on each machine $m$
\Input Data matrix $X_{m}$, penalty $\lambda$, threshold $\rho$.
\State 1) Define $\hat{\Theta}$ as the solution of \eqref{l1-opt} with penalty $\lambda$.
\State 2) Define $ \hat{\Theta}^d = \hat{\Theta} + \hat{\Theta}(\hat{\Theta}^{-1}-\hat{\Sigma})\hat{\Theta} $.
\State 3) Calculate an estimate $\hat{\sigma}_{ij}^2$ for $\sigma_{ij}^2$.
\State 4) Threshold $\hat{\Theta}^d$, so $ \hat{\Theta}^{d, \rho}_{ij} = \hat{\Theta}^{d}_{ij} \, \mathbb{I}\left(|\hat{\Theta}^d_{ij}| > \rho \, \hat{\sigma}_{ij} \right) $. \\
\Return Sparse matrix $\hat{\Theta}^{d, \rho}$.
\end{algorithmic}
\label{alg machine}
\end{algorithm}

The average debiased graphical lasso estimator is also not sparse, which might be unpractical, in particular because estimating the set of non-zeros can be more important than the values of the entries themselves. However, if the averaged estimator is thresholded at a certain level $\tau\asymp \sqrt{\log p / (nM)}$, correct model selection on the non-zero entries of the matrix is possible (see Theorem \ref{theorem:rates}). Each entry again requires an estimator of $\sigma^2_{ij}$. For normally distributed data, we use $\bar{\Theta}_{ii}\bar{\Theta}_{jj} +\bar{\Theta}_{ij}^2$, where $\bar{\Theta}=\frac{1}{M}\sum_{m=1}^M \hat{\Theta}^{d, \rho}_m$. Algorithm \ref{alg hub} shows the complete procedure in the central hub.

\begin{algorithm} 
\label{alg hub}
\caption{Distributed Inverse Covariance Estimator}
\begin{algorithmic}
\CentralHub
\Input Estimators from each machine $\hat{\Theta}^{d, \rho}_m$, threshold $\tau$.
\State 1) Define $ \hat{\Theta}^D :=\hat{\Theta}^D(\rho)= \frac{1}{M}\sum_{m=1}^M \hat{\Theta}^{d, \rho}_{m} $.
\State 2) Calculate an estimate $\hat{\sigma}_{M, ij}^2$ for $\sigma_{ij}^2$.
\State 3) Threshold $\hat{\Theta}^D$, so $ \hat{\Theta}_{ij}^{D, \tau} = \hat{\Theta}_{ij}^D \, \mathbb{I} \left(|\hat{\Theta}^D_{ij}| > \tau \, \hat{\sigma}_{M, ij} \right) $.\\
\Return Overall distributed estimator $\hat{\Theta}^{D, \tau}$.
\end{algorithmic}
\label{alg hub}
\end{algorithm}

\section{Theoretical results}

\noindent In this section, we derive bounds for the estimation error of our method. We show that the error of our distributed estimator has the same rate as the non-distributed graphical lasso when the number of machines increases slower than $M \lesssim n/(d^2\log p)$. Moreover, we show that a bandwidth of size $O(p^{2-c})$, with $c$ an absolute constant, is enough to correctly identify the set of non-zero entries of $\Theta$.

The following assumptions are necessary in order for the graphical lasso and debiased graphical lasso to have a good estimation performance \cite{ravikumar2011,jankova2015}, and we require them for our theoretical results.
\begin{enumerate}[label=\textbf{(A\arabic*)}]
\item \label{a1} There exists some $ \alpha \in (0, 1] $ such that 
$ \underset{e \in \mathcal{S}^c}{\max} \, || \Gamma_{e\mathcal{S}} (\Gamma_{\mathcal{S}\mathcal{S}})^{-1} ||_1\le (1-\alpha) $
where $ \Gamma := (\Theta)^{-1} \otimes (\Theta)^{-1} $ is the Hessian of (\ref{l1-opt}).
\item \label{a2} There exists a constant $ L < \infty $ such that 
$ 1/L \leq \Lambda_{min}(\Sigma) \leq \Lambda_{max}(\Sigma) \leq L $
where $ \Lambda_{min}(\Sigma) $ and $ \Lambda_{max}(\Sigma) $ are the minimum and maximum eigenvalues of $ \Sigma $.

\item \label{a3} The samples of the data are subgaussian random variables $X\in\Bbb{R}^p$, with $\mathbb{E}[X]=0 $, $ \text{Cov}(X)=\Sigma$ and subgaussian norm $\|X\|_{\phi_2}\leq K$.

\item \label{a4} The quantities  $\vertiii{\Sigma}_\infty$ and $\vertiii{(\Gamma_{\mathcal{S}\mathcal{S}})^{-1}}_\infty$ are bounded, where $\vertiii{\cdot}$ is the opertor norm of a matrix.
\end{enumerate}

In the literature, it is common to allow the error rates to depend on the bounding constants from the previous assumptions. Here, in order to keep the results simple, we state the error rates as a function of the dimensionality ($n,p,M$), sparsity ($s$, $d$), and smallest entries of $\Theta$, while keeping the other quantities bounded by absolute constants.

The work from \cite{jankova2015} studies the error rate of debiased graphical lasso. This result can be extended to the average of multiple debiased estimators as follows.

\begin{lemma}\label{lemma:avgglasso}
Suppose that assumptions \ref{a1} through \ref{a4} hold, $M<p$ and define $\lambda_m \asymp\sqrt{\log p/n}$ in equation \eqref{l1-opt}. Then, the averaged debiased graphical lasso estimator satisfies
\begin{equation*}
\left\|\frac{1}{M}\sum_{m=1}^M\Theta_m^{d, \rho} - \Theta\right\|_\infty = \bigop{\max\left\{ \sqrt{\frac{\log p}{Mn}},d\frac{\log p}{n} \right\}}.
\end{equation*}
\end{lemma}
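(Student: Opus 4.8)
The plan is to start from the representation \eqref{loss}, which writes each machine's debiased estimator as the truth plus a leading stochastic term and a bias remainder, and then average over the $M$ machines. Averaging \eqref{loss} and using that the true $\Theta$ is identical on every machine gives
\begin{equation*}
\frac{1}{M}\sum_{m=1}^M\hat{\Theta}_m^d - \Theta = -\,\Theta\left(\bar{\Sigma}-\Sigma\right)\Theta + \frac{1}{M}\sum_{m=1}^M\Delta_m,
\end{equation*}
where $\bar{\Sigma}:=\frac{1}{M}\sum_{m=1}^M\hat{\Sigma}_m$. The key observation is that, because the data is split equally, $\bar{\Sigma}$ is exactly the sample covariance formed from all $N:=Mn$ observations; this is what produces the improved $\sqrt{\log p/(Mn)}$ rate rather than $\sqrt{\log p/n}$. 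By the triangle inequality it then suffices to bound the two terms on the right separately in $\|\cdot\|_\infty$.

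For the bias remainder I would invoke the bound $\|\Delta_m\|_\infty=\bigop{d\log p/n}$ stated after \eqref{loss} and note that $\|\frac1M\sum_m\Delta_m\|_\infty\le\max_m\|\Delta_m\|_\infty$. Since the $\Delta_m$ are identically distributed and the underlying concentration has a tail that is polynomially small in $p$, a union bound over the $M<p$ machines is absorbed into the $\log p$ factor, so $\max_m\|\Delta_m\|_\infty=\bigop{d\log p/n}$ as well. This delivers the second term of the stated maximum.

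The main work is the stochastic term. Writing out entry $(i,j)$ and using $\Theta\Sigma\Theta=\Theta$, I obtain
\begin{equation*}
\left[\Theta(\bar{\Sigma}-\Sigma)\Theta\right]_{ij}=\frac{1}{N}\sum_{t=1}^N\Big((\Theta_{i\cdot}X_t)(X_t^T\Theta_{\cdot j})-\Theta_{ij}\Big),
\end{equation*}
an average of $N$ i.i.d.\ centered terms. Each summand is a product of two subgaussian linear forms, hence subexponential, with norm controlled uniformly in $(i,j)$ by assumptions \ref{a2}--\ref{a4}: the eigenvalue bound controls $\vertiii{\Theta}_2$ and thus the variance $\sigma^2_{ij}=\Theta_{ii}\Theta_{jj}+\Theta_{ij}^2$, while \ref{a3} supplies the subgaussian tails. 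Bernstein's inequality for subexponential averages then yields $\prob{\left|[\Theta(\bar{\Sigma}-\Sigma)\Theta]_{ij}\right|>t}\le2\exp(-cN\min\{t^2,t\})$; choosing $t\asymp\sqrt{\log p/N}$ and taking a union bound over the $p^2$ entries gives $\|\Theta(\bar{\Sigma}-\Sigma)\Theta\|_\infty=\bigop{\sqrt{\log p/(Mn)}}$. Combining the two bounds through the triangle inequality produces the claimed maximum of the two rates.

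The step I expect to be the main obstacle is the uniform subexponential control of the bilinear forms $(\Theta_{i\cdot}X_t)(X_t^T\Theta_{\cdot j})$ over all $p^2$ index pairs: I must verify that their subexponential norms are bounded by an absolute constant, so that the Bernstein exponent is genuinely of order $Nt^2$ in the Gaussian-tail regime $t\lesssim 1$, which is where $t\asymp\sqrt{\log p/N}$ lands whenever $\log p\ll N$. This is precisely where assumptions \ref{a2}--\ref{a4} enter, and it is also where I must check that the union bound over machines for the $\Delta_m$ terms, together with $M<p$, does not inflate the bias rate.
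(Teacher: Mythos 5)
Your proposal is correct and takes essentially the same route as the paper's proof: average the decomposition \eqref{loss}, observe that the averaged sample covariance equals the full-data sample covariance so the leading term concentrates at rate $\bigop{\sqrt{\log p/(Mn)}}$, and control the bias terms $\Delta_m$ uniformly over the $M<p$ machines using their polynomial tail bound. The only difference is that where the paper simply cites equation (11) of \cite{jankova2015} for the leading term and uses a product-of-probabilities argument for the $\Delta_m$'s, you prove the concentration directly via Bernstein's inequality for subexponential bilinear forms and replace the product bound with a union bound over machines; both give the same rates.
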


The previous lemma splits the error rate of the distributed estimator into two parts, which correspond to the variance and the bias. The variance vanishes as the number of machines increases, but the bias remains. However, when $M\lesssim n/(d^2\log p)$, the variance becomes dominant and the error is $\bigop{\sqrt{\log p/(Mn)}}$. This is the same error rate as the graphical lasso when the data is not distributed \cite{rothman2008}. Previous work shows similar results for distributed $\ell_1$ penalized regression \cite{lee2015}. Our method expands the framework to the graphical lasso estimator. Additionally, in the next theorem we show that  a bandwith of size $B\asymp p^{2-c}$ is enough
to select the correct set of entries and a similar rate for the mean squared error as the graphical lasso on the full data.

\begin{theorem} \label{theorem:rates}
Suppose that \ref{a1} through \ref{a4} hold and $M<p$. Define the tuning parameters of the algorithms as $\lambda\asymp\sqrt{\log(p)/n}$ and $\tau\asymp\sqrt{\log(p)/(Mn)}$. If $ \eta_\text{min} := \underset{(i,j)\in\mathcal{S}}{\min} \frac{|\Theta_{ij}|}{\sigma_{ij}} = \Omega(\sqrt{\log(p)/n})$, then there exists a constant $c\in(0,1]$ such that the following results hold for a bandwidth $B=\Omega\left(p^{2-c}\right)$.
\begin{enumerate}
\item Algorithm 2 recovers the correct set of non-zero entries of $\Theta$ with high probability, that is, \[\prob{\mathcal{S}(\hat{\Theta}_m^{D, \tau}) = \mathcal{S}(\Theta)}\geq 1-\bigo{1/p}.\]
\item The mean squared error of the estimator given by Algorithm 2 satisfies
\begin{equation} \label{mse}
\left\|\hat{\Theta}_m^{D, \tau} - \Theta\right\|_F^2 = \bigop{(s+p)\max\left\{\frac{\log p}{Mn},\frac{d^2\log^2 p}{n^2}\right\}}.
\end{equation}
\end{enumerate}
\end{theorem}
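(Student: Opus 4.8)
\noindent The plan is to reduce both conclusions to the uniform $\ell_\infty$ control supplied by Lemma \ref{lemma:avgglasso}. Writing $\hat\Theta^D = \frac{1}{M}\sum_{m=1}^M \hat\Theta_m^{d,\rho}$ and $\epsilon_n := \max\{\sqrt{\log p/(Mn)},\,d\log p/n\}$, that lemma gives $\|\hat\Theta^D-\Theta\|_\infty = \bigop{\epsilon_n}$, and I would first make this quantitative by identifying an event $\mathcal{E}$ of probability $1-\bigo{1/p}$ on which $\|\hat\Theta^D-\Theta\|_\infty \le C\epsilon_n$; the $p^{-1}$ rate is produced by the subgaussian tails of \ref{a3} together with a union bound over the $p^2$ entries. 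A short auxiliary step handles the data-dependent threshold: under \ref{a2} and \ref{a4} the population scale $\sigma_{ij}^2=\text{Var}(\Theta_{i\cdot}X_{1\cdot}X_{1\cdot}^T\Theta_{\cdot j})$ is pinched between absolute constants, and the plug-in $\hat\sigma_{M,ij}^2=\bar\Theta_{ii}\bar\Theta_{jj}+\bar\Theta_{ij}^2$ inherits these bounds once $\|\hat\Theta^D-\Theta\|_\infty$ is small, so $\tau\hat\sigma_{M,ij}\asymp\tau\asymp\sqrt{\log p/(Mn)}$ uniformly in $(i,j)$ and I may treat the threshold as deterministic of order $\tau$.

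For the support-recovery claim I would work on $\mathcal{E}$ and separate the two error types. For a true edge $(i,j)\in\mathcal{S}$, the triangle inequality gives $|\hat\Theta^D_{ij}|\ge|\Theta_{ij}|-C\epsilon_n\ge\eta_\text{min}\sigma_{ij}-C\epsilon_n$; since $\eta_\text{min}=\Omega(\sqrt{\log p/n})$ dominates both $\tau$ and $\epsilon_n$, this stays above $\tau\hat\sigma_{M,ij}$ and the entry is retained, so there are no false negatives. For a non-edge $(i,j)\in\mathcal{S}^c$ we have $\Theta_{ij}=0$, hence $|\hat\Theta^D_{ij}|=|\hat\Theta^D_{ij}-\Theta_{ij}|\le C\epsilon_n$, which sits below $\tau\hat\sigma_{M,ij}$ precisely when the variance term of $\epsilon_n$ dominates the bias term, i.e. in the regime $M\lesssim n/(d^2\log p)$ singled out after Lemma \ref{lemma:avgglasso}; this rules out false positives. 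Intersecting with $\mathcal{E}$ gives $\prob{\mathcal{S}(\hat\Theta^{D,\tau})=\mathcal{S}(\Theta)}\ge1-\bigo{1/p}$.

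The bandwidth statement is where the constant $c$ is fixed, and I expect the calibration of the per-machine threshold $\rho$ to be the main obstacle. I would set $\rho\asymp\sqrt{c\log p/n}$, the smallest value the channel allows, and then reconcile two competing demands. First, $\rho$ must be small enough that every true entry survives the per-machine thresholding; by the decomposition \eqref{loss} and $\eta_\text{min}=\Omega(\sqrt{\log p/n})$ this caps $c$ below an absolute constant. Second, the number of non-edges that cross the per-machine threshold must not exceed $B$: for a fixed non-edge the survival probability is controlled by the subgaussian tail of $\hat\Theta^d_{m,ij}$ and is of order $\exp(-c'\rho^2 n)=p^{-c''}$, so the expected count of surviving non-edges is $\bigo{p^{2-c}}$, which a Bernstein bound on the sum of survival indicators upgrades to a high-probability bound matching $B=\Omega(p^{2-c})$. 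The delicate part is that this count must hold uniformly across the $M$ machines and, simultaneously, the chosen $\rho$ must remain in the range for which Lemma \ref{lemma:avgglasso} was established, so that fitting the channel does not degrade the $\ell_\infty$ rate.

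For the mean-squared error I would condition on the correct-selection event from the first part. There $\hat\Theta^{D,\tau}_{ij}=0=\Theta_{ij}$ for every $(i,j)\in\mathcal{S}^c$, so the Frobenius error is carried entirely by the at most $s+p$ support entries, the $p$ diagonal entries being $\Omega(1)$ by \ref{a2} and always retained. Since thresholding leaves a surviving entry unchanged, each contributes at most $\|\hat\Theta^D-\Theta\|_\infty^2=\bigop{\epsilon_n^2}$, and summing over the $s+p$ entries yields $\|\hat\Theta^{D,\tau}-\Theta\|_F^2=\bigop{(s+p)\max\{\log p/(Mn),\,d^2\log^2 p/n^2\}}$, which is \eqref{mse}; on the complementary event of probability $\bigo{1/p}$ the error stays polynomially bounded, leaving the $\bigop{\cdot}$ statement intact.
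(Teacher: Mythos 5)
Your overall architecture for Part 1 is genuinely different from the paper's. The paper does not derive support recovery from the $\ell_\infty$ bound of Lemma \ref{lemma:avgglasso} at all; instead it proves an auxiliary result (Lemma \ref{lemma:MSprob}) built on the entrywise asymptotic normality of the debiased estimator: each $\hat{\Theta}^D_{ij}(0)$ is approximately $N(\Theta_{ij},\sigma_{ij}^2/(nM))$, and Gaussian tail bounds control the events that every signal entry exceeds a threshold ($\mathcal{E}_\gamma$) and that at most $b$ noise entries exceed it ($\mathcal{N}_{\gamma,b}$). These events are applied twice --- at the hub with threshold $\tau$, and on each machine with threshold $\rho$ and budget $b=B-s$ --- and combined by a conditioning argument: if every machine retains all of $\mathcal{S}$, then $\hat{\Theta}^D(\rho)$ agrees with $\hat{\Theta}^D(0)$ on $\mathcal{S}$, and per-machine thresholding can only shrink entries on $\mathcal{S}^c$, so the hub-level analysis transfers; the per-machine success probability $1-\bigo{p^{-1-c_1}}$ raised to the power $M<p$ gives the $1-\bigo{1/p}$ rate. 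Your deterministic threshold-dominance argument on a single high-probability $\ell_\infty$ event is a legitimate alternative that avoids the normality machinery, and your Part 2 is identical to the paper's (multiply the squared $\ell_\infty$ rate by $s+p$ on the recovery event; your remark about the complementary event is in fact more careful than the paper's one-line argument).

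There are, however, two concrete soft spots. First, your control of the number of per-machine false survivors invokes a Bernstein bound on the sum of survival indicators, but these indicators are all functions of the same $\hat{\Sigma}_m$ and hence dependent, so Bernstein does not apply as stated. The paper sidesteps dependence with a pigeonhole-plus-union bound: if more than $b$ noise entries exceed the threshold, then for any fixed $A\subset\mathcal{S}^c$ with $|A|\leq b$ some entry outside $A$ exceeds it, so the failure probability is at most $\min_{A}\sum_{(i,j)\in A^c}\prob{|\hat{\Theta}_{ij}|\geq t_{ij}}$; alternatively, Markov's inequality on the expected count would repair your step. Second, your exclusion of false positives requires $C\epsilon_n\leq\tau\hat{\sigma}_{M,ij}$, which forces the bias term $d\log p/n$ to be dominated by $\tau\asymp\sqrt{\log p/(Mn)}$, i.e.\ the restriction $M\lesssim n/(d^2\log p)$ --- a hypothesis the theorem does not make (it assumes only $M<p$). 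To be fair, the paper's proof hides essentially the same requirement inside the asymptotic normality and the $o(1)$ terms of Lemma \ref{lemma:MSprob} (the bias in \eqref{loss} must be negligible at scale $\sigma_{ij}/\sqrt{nM}$), but as written your argument proves Part 1 only on that sub-regime; you should either add this as an explicit assumption or explain why it follows from the theorem's hypotheses.
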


\section{Simulation results}

\noindent To evaluate the performance of our distributed estimator, we conduct a simulation study. We study the effect of varying the total sample size by changing the number of machines in the distributed system. We fix the number of variables to $p=1000$ and the sample size on each machine to $n=100$. Samples were generated from a normal distribution $N(0_{p}, \Sigma)$ such that $\Theta=(\Sigma)^{-1}$ has the form $\Theta_{i,i}=1$ and $\Theta_{i,i+1}=\Theta_{i,i-1}=0.4$ for $i=1,\ldots,p$. Thus, the associated Gaussian graph is a chain. All of our simulation results are calculated over 100 different trials.

To solve the problem \eqref{l1-opt}, we use  GLasso \cite{friedman2008sparse} with the R package \texttt{huge} \cite{zhao2012huge}. We set the tuning parameters $\lambda$ and $\tau$ according to the rates in Theorem \ref{theorem:rates}, so for each machine $\lambda=\sqrt{\log(p)/n}$ and $\tau=\sqrt{\log(p)/(Mn)}$. The bandwidth is set to $B=10p$, so only 1\% of the entries of $\hat{\Theta}^d_m$ are sent to the central hub. 

We compare the performance of our distributed estimator (Distributed) with the following estimators.
\begin{enumerate}
\item (Naive) An estimator based on averaging the graphical lasso estimators from each machine $ \hat{\Theta}^{\text{naive}} = \frac{1}{M}\sum_{m=1}^M \hat{\Theta}_{m} $.
\item (Full) An estimator based on the full non-distributed data given by the graphical lasso $ \hat{\Theta}^{\text{full}}$.
\item (Full Debiased) Since debiasing decreases the bias of the estimation, we also compare with a thresholded debiased graphical lasso estimator on the full data $\hat{\Theta}^{D,\text{full}}$.
\end{enumerate}

\begin{figure}[h]
\centering
\begin{minipage}[b]{.7\linewidth}
  \centering 
  \centerline{\includegraphics[width=\linewidth]{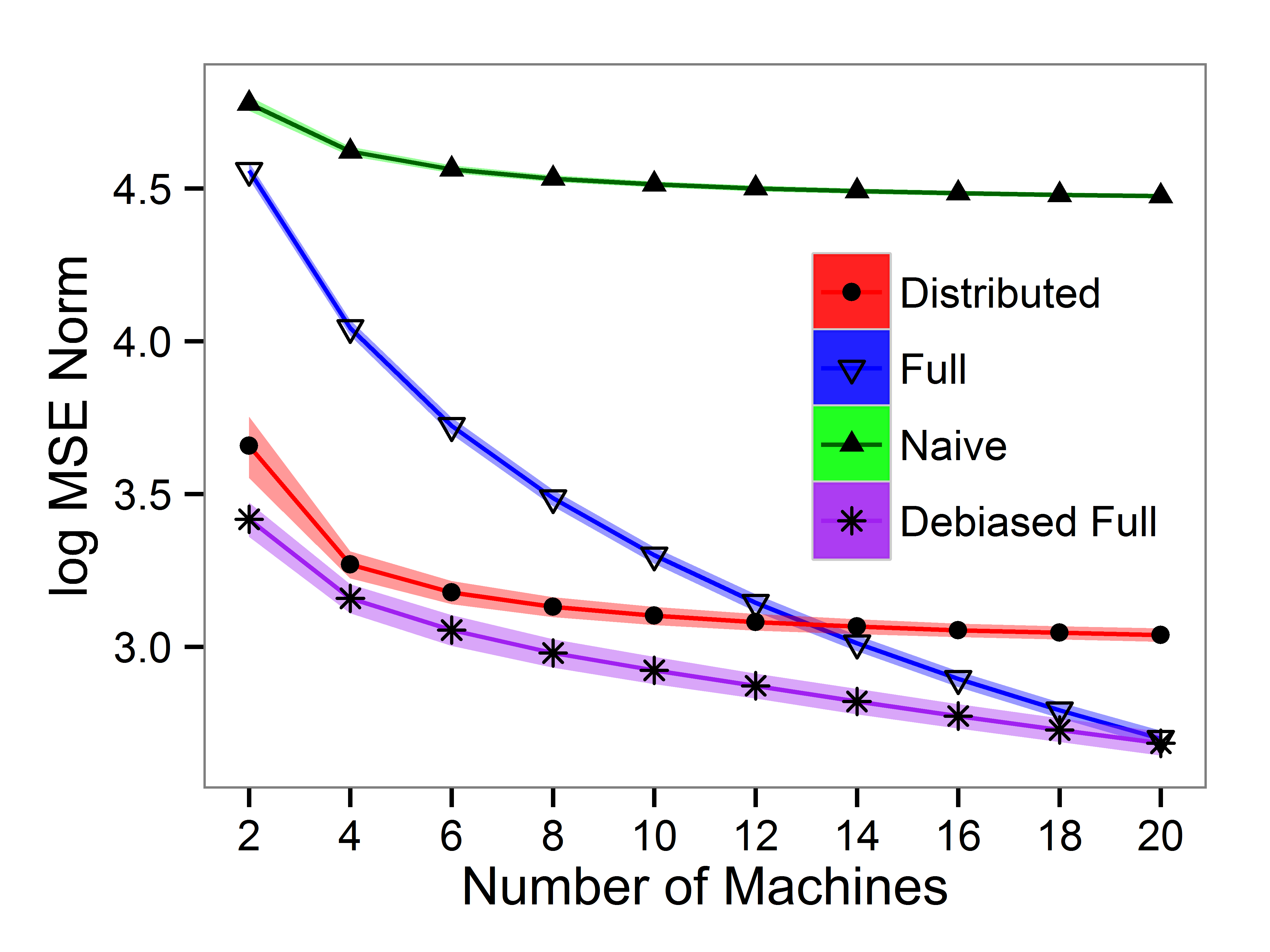}}
\end{minipage}
\begin{minipage}[b]{.7\linewidth}
 \centering
 \centerline{\includegraphics[width=\linewidth]{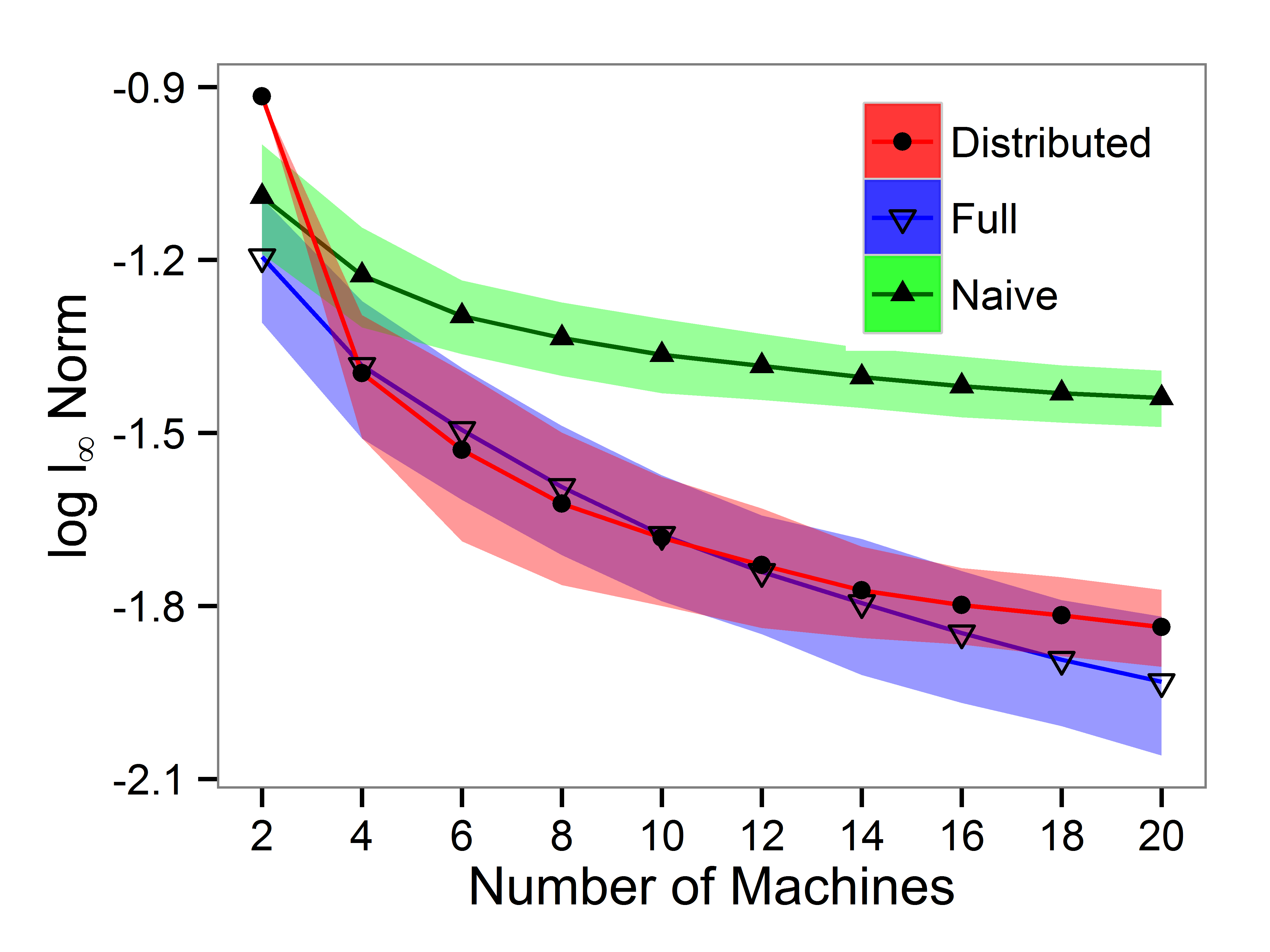}}
 \end{minipage}
\caption{Estimation error of the different estimators based on MSE and $\ell_\infty$ norm. Note that Debiased Full has the same error in $\ell_\infty$ as the Full estimator. For a small number of machines, the performance of our distributed estimator is similar to that of the estimator on the full data, and superior to the naive estimator.}
\label{fig:errorplots}
\end{figure}

\begin{figure}[H]
\begin{minipage}[b]{.5\linewidth}
  \centering
  \centerline{\includegraphics[width=\linewidth]{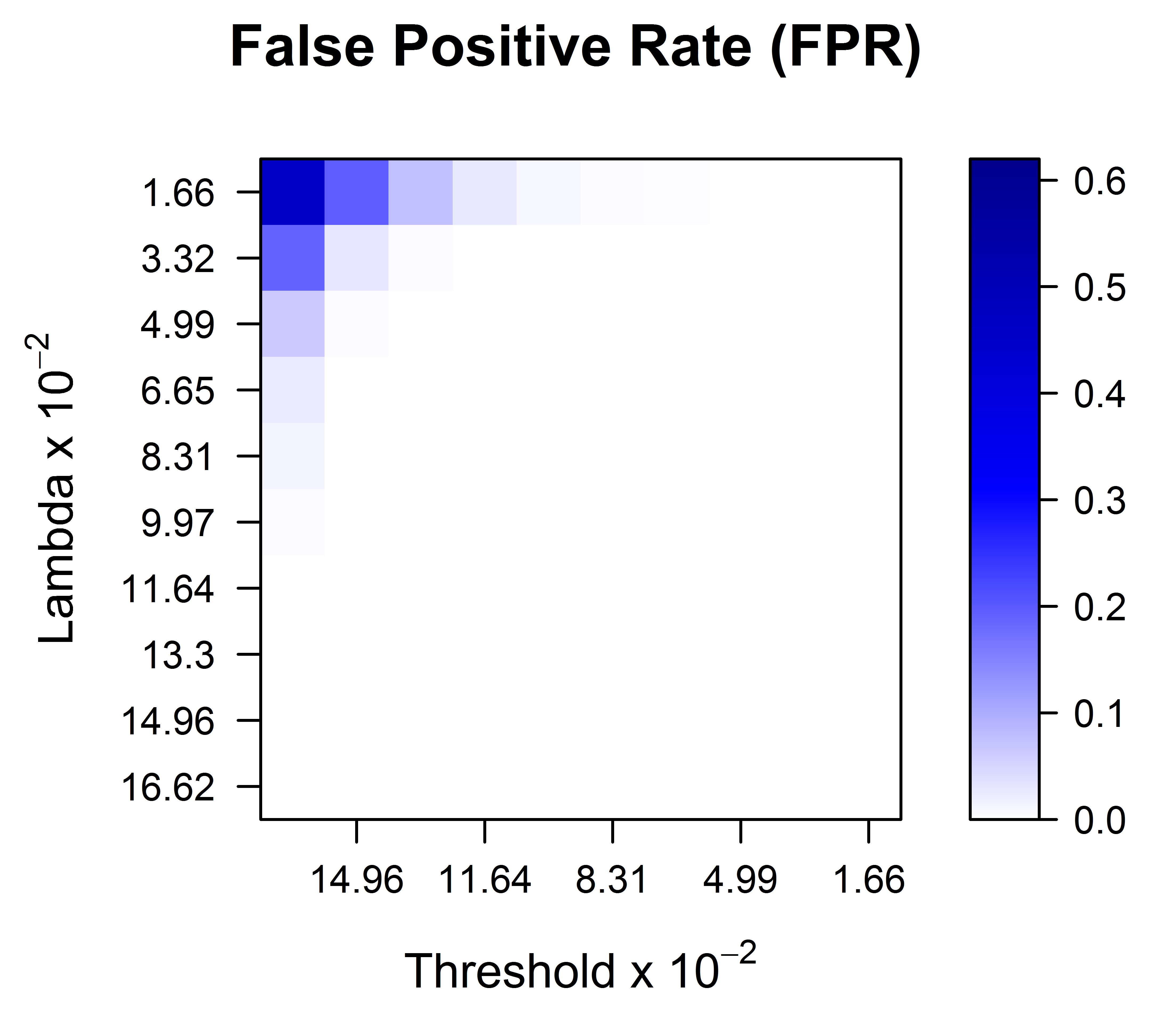}}
\end{minipage}
\hfill
\begin{minipage}[b]{.5\linewidth}
  \centering
  \centerline{\includegraphics[width=\linewidth]{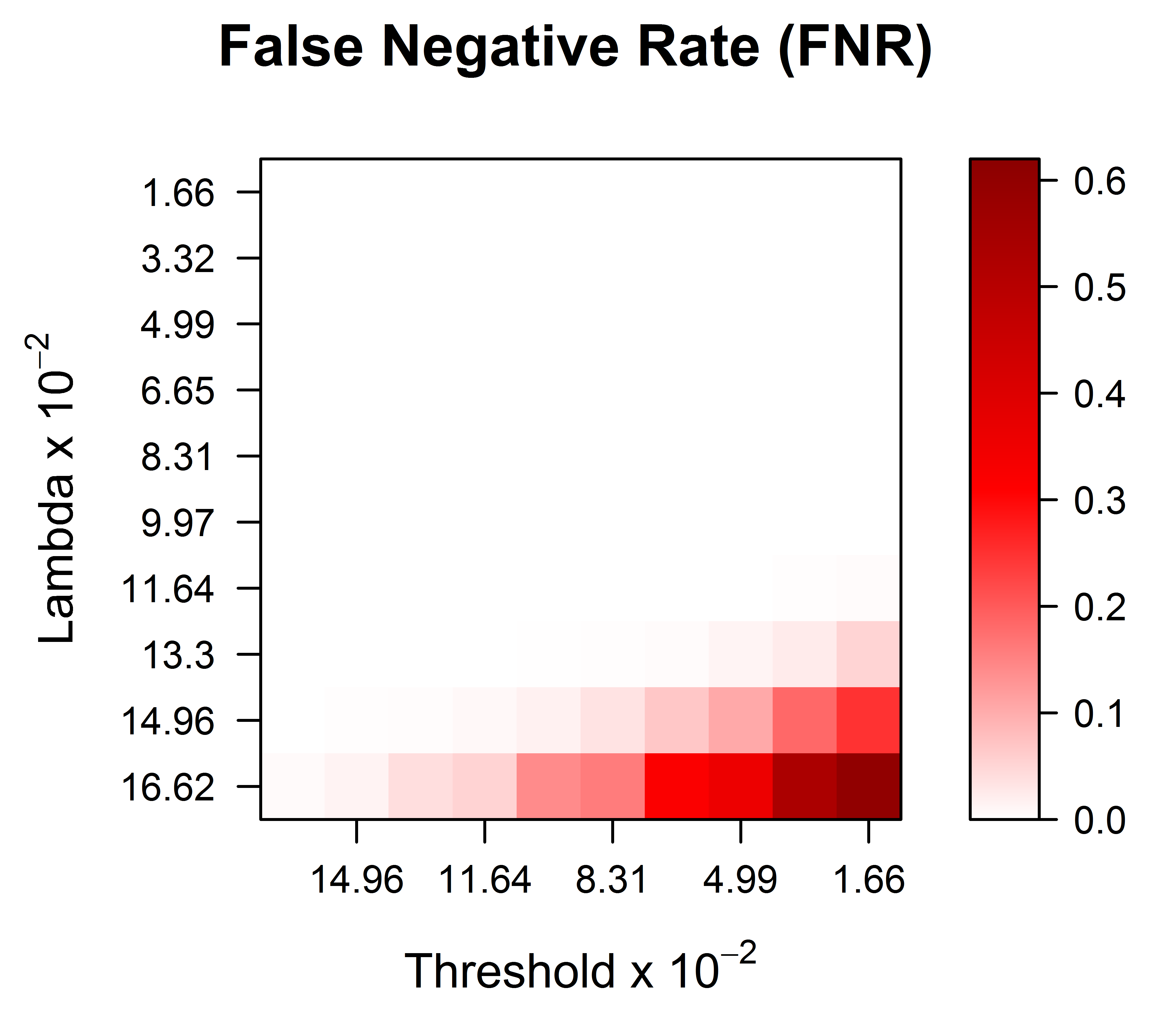}}
\end{minipage}
\caption{FPR and FNR  for recovering the correct set of non-zero entries of $\Theta$ using our distributed estimator. The number of machines is set to $10$ and we vary the tuning parameters of the algorithm. We note the robustness of the estimators in recovering the correct support using different values of $\lambda$ and $\tau$. }
\label{fig:modelselection}
\end{figure}

In Figure \ref{fig:errorplots}, the errors of the different estimators are shown.  In general, a better estimator can be obtained using the full data, which is expected. However, when the number of machines is small, the performance of our distributed estimator is comparable to the estimators on the full data, both in mean squared error and $\ell_\infty$ norm. In mean squared error, when $M$ is small, the debiased estimators perform better because they trade off bias for variance. Thus, Debiased Full performs the best and Distributed has similar performance. Naive performs poorly under this norm for all $M$. In $\ell_\infty$ norm, the performance of Full and Full Debiased is exactly the same because the largest error is from the entries in the diagonal, which are equal since the diagonal is not penalized. The error of Distributed is approximately the same for a wide range of values of $M$. However, in $M=2$, the performance of Distributed is affected by the bandwidth since missing edges in any machine have a larger impact on the error. But as the number of machines increases, the estimation of Distributed becomes insensitive to the bandwidth.

To evaluate the robustness of the method against the selection of the tuning parameters $\lambda$ and $\tau$, we measure false positive rate (defined as percentage of zeros of $\Theta$ identified as edges by the estimator) and false negative rate (defined as percentage of edges missed by the estimator). We use the same settings as the previous scenario, but the number of machines is fixed to $M=10$. The values of the tuning parameters for our method vary proportionally to $\lambda = \beta\sqrt{\log(p)/n}$ and $\tau = \beta\sqrt{\log(p)/(Mn)}$ with $\beta$ between $0.2$ and $2$. The bandwidth is still fixed at 1\% of the entries. We observe that for a wide range of parameters we recover the correct set of non-zeros. 

\section{Discussion}

\noindent We have proposed a method for estimating sparse inverse covariance matrices when the samples of a dataset are distributed over different machines. Our method agrees with other results for efficient distributed estimation in high-dimensional settings, and we also introduced efficiency in the bandwidth size. Asymptotically, the performance of our estimator is analogous to estimators with samples that are not distributed. Our simulation results are consistent with the theoretical rates and also show that we perform significantly better than a naive approach to estimation in a distributed setting. 

\section{Appendix}
\begin{proof}[Proof of Lemma \ref{lemma:avgglasso}]
Using the decomposition given in equation \eqref{loss}, we have
\begin{flalign*}
\left\|\sum_{m=1}^M \frac{\Theta_m^d}{M} -\Theta\right\|_\infty &\leq \left\| \Theta\left(\sum_{m=1}^M\frac{\hat{\Sigma}_m}{M}-\Sigma\right)\Theta\right\|_\infty + \sum_{m=1}^M\left\|\frac{\Delta_m}{M}\right\|_\infty\\
&  =  \left\|\Theta(\hat{\Sigma}- \Sigma)\Theta\right\|_\infty + \frac{1}{M}\sum_{m=1}^M\left\|\Delta_m\right\|_\infty,
\end{flalign*}
where $\hat{\Sigma}$ is the covariance matrix using the full data. Thus, the first term can be bounded as $ \left\|\Theta(\hat{\Sigma}- \Sigma)\Theta\right\|_\infty = \bigop{ \sqrt{\log p/(Mn) } } $ (equation (11) of \cite{jankova2015}) and the second term can be bounded as 
\begin{flalign*}
& \mathbb{P}\left(\frac{1}{M}\sum_{m=1}^M\left\|\Delta_m\right\|_\infty \leq \delta \right) \geq \mathbb{P}\left( \left\|\Delta_m\right\|_\infty \leq \delta \right)^M \\
& = (1 - 1/p^2)^M > (1 - 1/p^2)^p > 1 - 1/p
\end{flalign*}
when $ M < p $ and $\delta \asymp \sqrt{\log p / n}$  \cite{ravikumar2011}.
\end{proof}

\begin{lemma} \label{lemma:MSprob}
Suppose assumptions \ref{a1} through \ref{a4} hold. Let  $\gamma$ be a constant in $[0,1]$ and define the events
\begin{eqnarray*}
\mathcal{E}_\gamma(\hat{\Theta}^{D}(0)) & = & \left\{|\hat{\Theta}^{D}_{ij}(0)|>\gamma \sigma_{ij}\eta_\text{min}\text{ for }(i,j)\in\mathcal{S} \right\},\\
\mathcal{N}_{\gamma,b}(\hat{\Theta}^{D}(0)) & =& \left\{\sum_{(i,j)\in\mathcal{S}^C} \mathbb{I} \left(|\hat{\Theta}^D_{ij}(0)|>\gamma \sigma_{ij}\eta_\text{min}\right)\leq b\right\}.
\end{eqnarray*}
Thus, $\mathcal{E}_\gamma$ denotes the event that all non-zero entries of $\Theta$ in $\hat{\Theta}^D(0)$ are above a threshold $\gamma \sigma_{ij}\eta_\text{min}$, and $\mathcal{N}_{\gamma,b}$ the event when at most $b$ entries of $\hat{\Theta}^D(0)$ corresponding to the zeros of $\Theta$ are above the threshold. Then, 
\begin{eqnarray*} \label{probInt}
\prob{\mathcal{E}_\gamma} &= &1 -
 \bigo{s e^{-0.5 (1-\gamma)^2 n M ( \eta_\text{min} )^2 }}\\
\prob{\mathcal{N}_{\gamma,b}} & = & 1 -
 \bigo{(p^2-s-b) e^{-0.5 \gamma^2 n M \left( \eta_\text{min} \right)^2 }}\\\end{eqnarray*}
\end{lemma}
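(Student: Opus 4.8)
The plan is to isolate the leading Gaussian fluctuation of each entry of $\hat{\Theta}^{D}(0)$ and then reduce both probabilities to a single per-entry subexponential tail bound, followed by a union (and counting) argument over $\mathcal{S}$ and $\mathcal{S}^c$. First I would average the decomposition in equation~\eqref{loss} over the $M$ machines to obtain, entrywise,
\[
\hat{\Theta}^{D}_{ij}(0) = \Theta_{ij} - W_{ij} + \bar{\Delta}_{ij}, \qquad W := \Theta(\hat{\Sigma}-\Sigma)\Theta, \qquad \bar{\Delta} := \tfrac{1}{M}\sum_{m=1}^{M}\Delta_m,
\]
where $\hat{\Sigma}$ is the sample covariance built from all $Mn$ observations. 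The central observation is that $W_{ij}=\tfrac{1}{Mn}\sum_{k=1}^{Mn}\bigl(\Theta_{i\cdot}X_kX_k^{\top}\Theta_{\cdot j}-\Theta_{ij}\bigr)$ is an average of $Mn$ i.i.d.\ centered terms of common variance $\sigma_{ij}^2=\text{Var}(\Theta_{i\cdot}X_1X_1^{\top}\Theta_{\cdot j})$, so $W_{ij}$ has variance $\sigma_{ij}^2/(Mn)$; this is precisely the quantity controlled in equation~(11) of \cite{jankova2015} and already used in Lemma~\ref{lemma:avgglasso}.

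Next I would establish the single-entry tail bound $\prob{|W_{ij}|>t\,\sigma_{ij}}\le 2\,e^{-0.5\,nM\,t^2}$. Under \ref{a2}--\ref{a3} each summand $\Theta_{i\cdot}X_kX_k^{\top}\Theta_{\cdot j}$ is a quadratic form in a subgaussian vector and hence subexponential, so a Bernstein/Hanson--Wright inequality applies; for deviations of order $t\asymp\eta_\text{min}\to 0$ the Gaussian branch of that inequality dominates and reproduces the variance-scaled exponent $0.5\,nM\,t^2$. The debiasing remainder $\bar{\Delta}_{ij}$ is of strictly smaller order than $\sigma_{ij}\eta_\text{min}$ in the regime where the lemma is invoked (using $\|\Delta_m\|_\infty=\bigop{d\log p/n}$ as in Lemma~\ref{lemma:avgglasso}), so it can be absorbed on a high-probability event and affects only constants.

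For $\mathcal{E}_\gamma$, every $(i,j)\in\mathcal{S}$ satisfies $|\Theta_{ij}|\ge\sigma_{ij}\eta_\text{min}$ by definition of $\eta_\text{min}$, so the reverse triangle inequality gives $|\hat{\Theta}^{D}_{ij}(0)|\ge\sigma_{ij}\eta_\text{min}-|W_{ij}-\bar{\Delta}_{ij}|$; the entry drops below $\gamma\sigma_{ij}\eta_\text{min}$ only if $|W_{ij}-\bar{\Delta}_{ij}|\ge(1-\gamma)\sigma_{ij}\eta_\text{min}$. Applying the tail bound with $t=(1-\gamma)\eta_\text{min}$ and a union bound over the $s$ entries of $\mathcal{S}$ yields $\prob{\mathcal{E}_\gamma^c}=\bigo{s\,e^{-0.5(1-\gamma)^2 nM\eta_\text{min}^2}}$. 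For $\mathcal{N}_{\gamma,b}$, every $(i,j)\in\mathcal{S}^c$ has $\Theta_{ij}=0$, so a single off-support entry exceeds $\gamma\sigma_{ij}\eta_\text{min}$ with probability at most $q:=2\,e^{-0.5\gamma^2 nM\eta_\text{min}^2}$ by the tail bound with $t=\gamma\eta_\text{min}$; a union/counting argument over the $p^2-s-b$ unbudgeted off-support positions then gives $\prob{\mathcal{N}_{\gamma,b}^c}=\bigo{(p^2-s-b)\,q}$, as claimed.

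The main obstacle is the second step: obtaining the tail bound with the sharp constant $0.5$ and the exact variance $\sigma_{ij}^2$, rather than a cruder subexponential bound. This requires verifying that the deviations of interest lie in the Gaussian regime of the Bernstein/Hanson--Wright inequality and that both the quadratic-form non-Gaussianity and the remainder $\bar{\Delta}$ are genuinely negligible relative to the threshold $\sigma_{ij}\eta_\text{min}$; the union and counting steps over $\mathcal{S}$ and $\mathcal{S}^c$ are then routine.
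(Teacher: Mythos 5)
Your overall architecture (decompose each entry into a centered fluctuation plus a bias term, bound a single-entry tail, then union/count over $\mathcal{S}$ and $\mathcal{S}^c$) is the right skeleton, and your counting steps for $\mathcal{E}_\gamma$ and $\mathcal{N}_{\gamma,b}$ match what the paper does. However, there is a genuine gap at the step you yourself flag as the main obstacle, and the tool you propose cannot close it. The Hanson--Wright/Bernstein inequality for subexponential summands gives a bound of the form
\[
\prob{|W_{ij}|>t\,\sigma_{ij}}\leq 2\exp\left(-c\,\min\left\{\frac{nM\,t^2\sigma_{ij}^2}{K_*^2},\,\frac{\sqrt{nM}\,t\,\sigma_{ij}}{K_*}\right\}\right),
\]
where $c$ is an unspecified absolute constant and $K_*$ is the subexponential norm of the summands $\Theta_{i\cdot}X_kX_k^{\top}\Theta_{\cdot j}$. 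Even in the ``Gaussian branch,'' the exponent is $-c\,nM\,t^2\sigma_{ij}^2/K_*^2$, not $-0.5\,nM\,t^2$: the constant $c$ is not $1/2$ and the variance proxy $K_*^2$ strictly dominates the true variance $\sigma_{ij}^2$ (for quadratic forms in subgaussian, even exactly Gaussian, vectors the gap is a genuine constant factor). Deviations being small ($t\asymp\eta_\text{min}\to 0$) only selects which branch of the min is active; it does not sharpen the constant or replace $K_*^2$ by $\sigma_{ij}^2$. The sharp exponent $0.5\,nM\,t^2$ with the exact variance is intrinsically a central-limit-theorem statement: it requires either a Cram\'er-type moderate-deviations theorem (valid only when the standardized deviation $\sqrt{nM}\,\eta_\text{min}$ is $o\bigl((nM)^{1/6}\bigr)$, an extra condition your argument would have to impose) or what the paper actually does, namely invoke the asymptotic normality of the debiased entries from \cite{jankova2015}, i.e. $\hat{\Theta}^D_{ij}(0)$ is asymptotically $N(\Theta_{ij},\sigma_{ij}^2/(nM))$, and then apply the exact standard normal tail $\prob{Z<-x}\leq e^{-x^2/2}$, carrying the normal-approximation error along as an additive $o(1)$ term.

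A secondary point: in your scheme the bias $\bar{\Delta}_{ij}$ cannot simply be ``absorbed into constants.'' If $|\bar{\Delta}_{ij}|\leq \epsilon\,\sigma_{ij}\eta_\text{min}$ on a high-probability event, your exponents degrade from $0.5(1-\gamma)^2$ and $0.5\gamma^2$ to $0.5(1-\gamma-\epsilon)^2$ and $0.5(\gamma-\epsilon)^2$, which changes the constants that the lemma states explicitly; making $\epsilon$ vanish requires $d\sqrt{M}\log p/\sqrt{n}\to 0$, again an asymptotic condition rather than a constant adjustment. In the paper's route this issue is folded into the same $o(1)$ normal-approximation error, since the asymptotic normality result already accounts for the remainder $\Delta_m$. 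The cleanest repair of your proposal is therefore to replace your step 2 with the citation to the asymptotic normality of $\hat{\Theta}^{D}_{ij}(0)$ and the Gaussian tail bound, at which point your proof coincides with the paper's.
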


\begin{proof}[Proof of Lemma \ref{lemma:MSprob}]
By the asymptotic normality of the debiased graphical lasso entries \cite{jankova2015},  $\hat{\Theta}^D_{ij}(0)$ is asymptotically distributed as $N(\Theta_{ij}, \sigma_{ij}^2/(nM))$. Therefore,
\begin{eqnarray} \label{probE}
\prob{\mathcal{E}_\gamma^c} & \leq & \notag \sum_{(i,j)\in\mathcal{S}}\prob{|\hat{\Theta}^D_{ij}(0)|<t_{ij}}+o(1) \notag \\
& \leq & s\max_{(i,j)\in\mathcal{S}}\prob{|Z+\sqrt{nM}\frac{\Theta_{ij}}{\sigma_{ij}}|< \gamma \sqrt{nM} \eta_{\text{min}}} +o(1)\notag \\
&\leq & s\prob{Z< -(1-\gamma)\sqrt{nM}\eta_{\text{min}}}+o(1),\notag
\end{eqnarray}
where $Z$ is a standard normal random variable. Using a similar argument, the second event can be bounded as
\begin{eqnarray} \label{probN}
\prob{\mathcal{N}_{\gamma,b}^c} & = & \prob{\bigcup_{\substack{A\subset S^c \\ |A|\leq b}}\bigcap_{(i,j)\in A^c}(|\hat{\Theta}_{ij}^D(0)|\geq t_{ij})} +o(1)\notag \\
& \leq & \min_{\substack{A\subset S^c \\ |A|\leq b}} \sum_{(i,j)\in A^c}\prob{|Z|\geq\gamma\sqrt{nM}\eta_{\text{min}}}+o(1) \notag \\
&\leq & 2(p^2-s-b)\prob{Z\leq-\gamma\sqrt{nM}\eta_{\text{min}}}+o(1)\notag
\end{eqnarray}
Using tails of a normal distribution, the results follow.
\end{proof}

\begin{proof}[Proof of Theorem \ref{theorem:rates}]
\emph{Part 1}. In order to recover the correct set of non-zero entries, the event $\mathcal{E}_{\tau}(\hat{\Theta}^{D}(\rho))\cap\mathcal{N}_{\tau,B-s}(\hat{\Theta}^{D}(\rho))$ must hold. Note that if in Algorithm 1, the threshold given by the bandwidth contains all non-zero entries, then $\hat{\Theta}^{D}_{ij}(\rho)=\hat{\Theta}^D_{ij}(0)$, for $(i,j)\in\mathcal{S}$. Moreover, 
since $\prob{|\hat{\Theta}^{D}_{ij}(\rho)|>h} \leq \prob{|\hat{\Theta}^{D}_{ij}(0)|>h}$ for $(i,j)\in\mathcal{S}^C$ and all $h>0$, 
then, by conditioning on the event $\cap_{m=1}^M\left(\mathcal{E}_{\rho}(\hat{\Theta}^d_m))\cap\mathcal{N}_{\rho,b}(\hat{\Theta}^d_m)\right)$, it holds that
\begin{flalign*}
&\prob{\mathcal{E}_{\tau}(\hat{\Theta}^{D}(\rho))\cap\mathcal{N}_{\tau,0}(\hat{\Theta}^{D}(\rho))}  \geq \\
&\prob{\mathcal{E}_{\tau}(\hat{\Theta}^{D}(0))\cap\mathcal{N}_{\tau,0}(\hat{\Theta}^{D}(0))}\prob{\cap_{m=1}^M\left(\mathcal{E}_{\rho}(\hat{\Theta}^d_m)\cap\mathcal{N}_{\rho,b}(\hat{\Theta}^d_m)\right)}.
\end{flalign*}

Using the conditions of Theorem \ref{theorem:rates}, in particular the size of $\eta_{\text{min}}$, and by Lemma \ref{lemma:MSprob}, it holds that the first term of the product is $1-\bigo{p^{-1}}$. Similarly, using Lemma \ref{lemma:MSprob} again, we calculate the probability of correct recovery in all machines 
\begin{eqnarray*}
\prob{\cap_{m=1}^M\left(\mathcal{E}_{\rho}(\hat{\Theta}^d_m)\cap\mathcal{N}_{\rho,B-s}(\hat{\Theta}^d_m)\right)} & = & \left(1-\bigo{\frac{1}{p^{1+c_1}}}\right)^M,
\end{eqnarray*}
which is $1-\bigo{\frac{1}{p}}$ when $M<p$, and $c_1>0$ is an absolute constant. Thus, correct model selection holds with high probability.

\noindent\emph{Part 2}.  Using the equivalence between $\ell_\infty$ and Frobenius norm together with the correct recovery result of part 1, we obtain equation \ref{mse} as a consequence of multiplying the result of Lemma \ref{lemma:avgglasso}  by $p+s$.
\end{proof}

\newpage

\section{Acknowledgments}
\noindent The research in this paper was partially supported by the Consortium for Verification Technology under Department of Energy National Nuclear Security Administration award number DE-NA0002534.

\bibliographystyle{IEEEbib}
\bibliography{strings,refs}

\begin{thebibliography}{10}

\bibitem{banerjee2008model}
Onureena Banerjee, Laurent El~Ghaoui, and Alexandre d'Aspremont,
\newblock ``Model selection through sparse maximum likelihood estimation for
  multivariate {G}aussian or binary data,''
\newblock {\em The Journal of Machine Learning Research}, vol. 9, pp. 485--516,
  2008.

\bibitem{friedman2008sparse}
Jerome Friedman, Trevor Hastie, and Robert Tibshirani,
\newblock ``Sparse inverse covariance estimation with the graphical lasso,''
\newblock {\em Biostatistics}, vol. 9, no. 3, pp. 432--441, 2008.

\bibitem{Boyd:2011:DOS:2185815.2185816}
Stephen Boyd, Neal Parikh, Eric Chu, Borja Peleato, and Jonathan Eckstein,
\newblock ``Distributed optimization and statistical learning via the
  alternating direction method of multipliers,''
\newblock {\em Found. Trends Mach. Learn.}, vol. 3, no. 1, pp. 1--122, Jan.
  2011.

\bibitem{duchi2012dual}
John~C Duchi, Alekh Agarwal, and Martin~J Wainwright,
\newblock ``Dual averaging for distributed optimization: convergence analysis
  and network scaling,''
\newblock {\em Automatic control, IEEE Transactions on}, vol. 57, no. 3, pp.
  592--606, 2012.

\bibitem{Zhang:2013:CAS:2567709.2567769}
Yuchen Zhang, John~C. Duchi, and Martin~J. Wainwright,
\newblock ``Communication-efficient algorithms for statistical optimization,''
\newblock {\em J. Mach. Learn. Res.}, vol. 14, no. 1, pp. 3321--3363, Jan.
  2013.

\bibitem{lee2015}
Jason~D. {Lee}, Yuekai {Sun}, Qiang {Liu}, and Jonathan~E. {Taylor},
\newblock ``{Communication-efficient sparse regression: a one-shot approach},''
\newblock {\em ArXiv e-prints}, Mar. 2015.

\bibitem{2013Javanmard}
Adel {Javanmard} and Andrea {Montanari},
\newblock ``{Confidence Intervals and Hypothesis Testing for High-Dimensional
  Regression},''
\newblock {\em ArXiv e-prints}, June 2013.

\bibitem{meng2014marginal}
Zhaoshi Meng, Dennis Wei, Ami Wiesel, and Alfred~O Hero,
\newblock ``Marginal likelihoods for distributed parameter estimation of
  {G}aussian graphical models,''
\newblock {\em Signal Processing, IEEE Transactions on}, vol. 62, no. 20, pp.
  5425--5438, 2014.

\bibitem{rothman2008}
Adam~J. Rothman, Peter~J. Bickel, Elizaveta Levina, and Ji~Zhu,
\newblock ``Sparse permutation invariant covariance estimation,''
\newblock {\em Electron. J. Statist.}, vol. 2, pp. 494--515, 2008.

\bibitem{ravikumar2011}
Pradeep Ravikumar, Martin~J. Wainwright, Garvesh Raskutti, and Bin Yu,
\newblock ``High-dimensional covariance estimation by minimizing l1-penalized
  log-determinant divergence,''
\newblock {\em Electron. J. Statist.}, vol. 5, pp. 935--980, 2011.

\bibitem{jankova2015}
Jana Jankov\'a and Sara van~de Geer,
\newblock ``Confidence intervals for high-dimensional inverse covariance
  estimation,''
\newblock {\em Electron. J. Statist.}, vol. 9, no. 1, pp. 1205--1229, 2015.

\bibitem{zhao2012huge}
Tuo Zhao, Han Liu, Kathryn Roeder, John Lafferty, and Larry Wasserman,
\newblock ``The huge package for high-dimensional undirected graph estimation
  in {R},''
\newblock {\em The Journal of Machine Learning Research}, vol. 13, no. 1, pp.
  1059--1062, 2012.

\end{thebibliography}

\end{document}